\begin{document}

\newtheorem{corollary}{Corollary}[section]
\newtheorem{lemma}{Lemma}[section]
\newtheorem{theorem}{Theorem}[section]
\newtheorem{definition}{Definition}[section]


\title{Comparison of Communities Detection Algorithms for Multiplex}


\author{Chuan Wen, Loe$^1$ and Henrik Jeldtoft Jensen$^2$}
\affiliation{}


\date{\today}

\begin{abstract}
Multiplex is a set of graphs on the same vertex set, i.e. $\{G(V,E_1),\ldots,G(V,E_m)\}$. It is a generalized graph to model multiple relationships with parallel edges between vertices. This paper is a literature review of existing communities detection algorithms for multiplex and a comparative analysis of them.
\end{abstract}

\pacs{}

\maketitle

\section{Introduction}
Complexity Science studies the collective behaviour of a system of interacting agents, and a graph (network) is often an apt representation to visualize such systems. Traditionally the agents are expressed as the vertices, and an edge between a vertex pair implies that there are interactions between them. 

However the modern outlook in Network Science is to generalize the edges to encapsulate the multiple type of relationships between agents. For example in a social network, people are acquainted through work, school, family, etc. This is to preserve the richness of the data and to reveal deeper perspectives of the system. This is known as a \emph{multiplex}.

Multiplex is a natural transition from graphs and many disciplines independently studied this mathematical model for various applications like communities detection. A community refers to a set of vertices that behaves differently from the rest of the system. This is to modularize a complex system into simpler representations to form the bigger picture of the information flow.

The first half of this paper is a literature review of the different communities detection algorithms and some theoretical bounds on graph cutting. Next we propose a suite of benchmark multiplexes and similarity metrics to determine the similarity of various communities detection algorithms. Finally we present the empirical results for this paper.

\section{Preliminaries}
\begin{definition}\label{def:multiplex}\textbf{(Multiplex)}
A multiplex is a finite set of $m$ graphs, $\mathcal{G} = \{G^1,\ldots,G^m\}$, where every graph $G^i = (V,E_i)$ has a distinct edge set $E_i \subseteq V \times V$.
\end{definition}

There are many synonymous names for multiplex and occasionally they simultaneously used in the same paper. This is to assist readers to visualize the system through the different descriptions. For instance a \emph{``multi-layer network"} describes the multiplex as  layers of graphs, i.e. layer $i$ implies $G^i \in \mathcal{G}$. The following list is the synonymous names for multiplex and will be avoided for the rest of the paper: 
\textbf{Multigraph} \cite{Cormode05,gjoka11}, 
\textbf{MultiDimensional Network} \cite{Berlingerio11_2,Berlingerio11,Hossmann12,kazienko2011multidimensional,lerch2006cliques,RossettiBG11,tang2009}, 
\textbf{Multi-Relational Network} \cite{cai_et_al_2005,DavisLC11,Rodriguez201029,Szell03082010,BryKWF12,LouatiHP12,Skvoretz07}, 
\textbf{MultiLayer Network} \cite{brodka12_2,brodka2011degree,brodka2013introduction,Li12,Cozzo12,mucha2010community}, 
\textbf{PolySocial Networks} \cite{Fischer}, 
\textbf{Multi-Modal Network} \cite{ambrosino2012hub,leblanc1988transit,Nagurney2000393}, 
\textbf{Heterogeneous  Networks} \cite{dong2012link} and 
\textbf{Multiple Networks} \cite{magnani2013formation}. 
However we will use the $i^{th}$ \emph{layer}, \emph{relationship} or \emph{dimension} to refer to the graph $G^i \in \mathcal{G}$. This is to help us to express certain ideas in a more concrete manner.

For example when layers of graphs are stacked on top of each other, there will be vertex pairs with edges ``overlapping" each other (Def. \ref{def:overlap}). The distribution to the number of the overlapping edges is an important characteristic of a multiplex, where it is used to classify different multiplex ensembles \cite{cellai13,Lee12,chen2012degree,RanolaASSL10,Bianconi13}.
\begin{definition}\textbf{(Overlapping Edges)} \label{def:overlap}
Let two edges from two different graphs in $\mathcal{G}$ be $e(u,v) \in E_i$ and $e'(u',v') \in E_j$, where $i \neq j$. The edges $e$ and $e'$ overlap if and only if $e=e'$, i.e. $u=u'$ and $v=v'$.
\end{definition}

If there arise ambiguity to the context, we will distinguish the ``community" between a multiplex and a monoplex (a graph in a multiplex) as \textbf{multiplex-community} and \textbf{monoplex-community} respectively. This will avoid confusion when we review the different multiplex communities detection algorithms.

Many of these multiplex-algorithms divide the multiplex problem into independent communities detection problems on the monoplexes. The solutions for these monoplexes, known as \textbf{auxiliary-partitions}, provide the supplementary information for the multiplex-algorithm to aggregate. The principal solution from the aggregation forms the \textbf{multiplex-partition}, which defines the communities in the multiplex.


\section{Definitions of a Multiplex-Community}\label{sec:quality_community}
A \emph{community} is vaguely described as a set of interacting agents that collectively behaves differently from its neighboring agents. However there is no universally accepted formal definition, since the construct of a community depends on the problem domain. \cite{fortunato2010community} categorized this diversity by the communities' \emph{Local Definitions}, \emph{Global Definitions} and \emph{Vertex Similarity}. 

\subsection{Local Definition}
From the assumption that a community has weak interactions with their neighboring vertices, the evaluation of a community can be isolated from the rest of the network. Thus it is sufficient to establish a community from the perspective of the members in the community.

Consider each graphs in a multiplex as an independent mode of communication between the members, e.g. email, telephone, postal, etc. A high quality community should resume high information flow amongst its members when one of the communication modes fails (1 less graph). Hence Berlingerio et al. proposed the \emph{redundancy} of the communities \cite{Berlingerio11_2} as a measure to the quality of a multiplex-community.

\begin{definition} \textbf{(Redundancy)}  Let $W \subseteq V$ be the set of vertices in a multiplex-community and $P \subseteq W \times W$ be the set of vertex pairs in $W$ that are adjacent in $\geq1$ relationship. The set of redundant vertex pairs are $P' \subseteq P$ where vertex pairs in $W$ that are adjacent in $\geq 2$ relationships. The redundancy of $W$ is determined by:
\begin{equation}\label{def:redundancy}
\frac{1}{|\mathcal{G}| \times |P|} \sum_{G^i \in \mathcal{G}} \sum_{\{u,v\} \in P'} \delta(u,v,E_i) ,
\end{equation}
where $\delta(u,v,E_i) = 1$ (zero otherwise) if $\{u,v\} \in E_i$.
\end{definition}

Eq. \ref{def:redundancy} counts the number of edges in the multiplex-community where their corresponding vertex pairs are adjacent in two or more graphs. The sum is normalized by the theoretical maximum number of edges between all adjacent vertex pairs, i.e. $({|\mathcal{G}| \times |P|})$. The quality of a multiplex-community is determined by how identical the subgraphs (induced by the vertices of the multiplex-community) are across the graphs in the multiplex. 

Thus the \emph{redundancy} does not depends on the number of edges in the multiplex-community, i.e. not a necessary condition to its quality. This can lead to an unusual idea that a community can be low in density. For instance a cycle of overlapping edges form a ``community" of equal quality as a complete clique of overlapping edges.

\subsection{Global Definition}\label{global_def}
The global measure of a partition considers the quality of the communities \emph{and} their interactions among the communities. For example the \emph{modularity} function by Newman and Girvan measures how different a monoplex-communities are from a random graph (Def. \ref{modularity}) \cite{newman2006modularity}.
\begin{definition}\label{modularity}
\textbf{(Modularity)} Let $A_{ij}$ be the adjacency matrix of a graph with $|E|$ edges and $k_i$ is the degree of vertex $i$. $\delta(v_i, v_j) = 1$ if $v_i$ and $v_j$ are in the same community, otherwise  $\delta(v_i, v_j) = 0$. The \emph{Modularity function} measures how far the communities differs from a random graph:
\begin{equation}\label{eq:modularity}
Q = \frac{1}{2|E|} \sum_{ij} \Big( A_{ij} - \frac{k_ik_j}{2|E|} \Big) \delta(v_i, v_j).
\end{equation}
\end{definition}

Given a fixed partition on the vertex set, the modularity on each of the $m$ graphs in the multiplex differs. Therefore a good multiplex-communities suggests that all the monoplex-communites in the graphs have high modularity.

To quantify this concept, Tang et. al claims that if there exists latent communities in the multiplex, a subset of the graphs in the multiplex, $\mathcal{G}' \subset \mathcal{G}$ has sufficient information to find these communities \cite{tang2009}. If the hypothesis is true, then the communities detected from $\mathcal{G}'$ should reflect high modularity on the rest of the graphs in the multiplex, i.e. $\mathcal{G} \setminus \mathcal{G}'$. 

In the language of machine learning, pick a random graph $G \in \mathcal{G}$ as the test data and let $\mathcal{G}' = \mathcal{G} \setminus G$ be the training data. The multiplex-partition $P$ yielded from a communities detection algorithm on $\mathcal{G}'$ is evaluated with the modularity function on the test data $G$. $P$ is a good multiplex-partition if the modularity of partition $P$ on the graph $G$ is maximized. This extends the modularity metric for multiplex.

\subsection{Vertex Similarity}
Two vertices belongs to a Vertex Similarity community if they are similar by some measure. For example the Edge Clustering Coefficient \cite{Radicchi} of a vertex pair in a graph measures the (normalized) number of common neighbors between them. A high Edge Clustering Coefficient implies that there are many common neighbors between the vertex pair, thus suggesting that the two vertices should belong in the same community. In the extension for multiplex, Brodka et. al introduced Cross-Layer Edge Clustering Coefficient (CLECC) \cite{brodka12_2}.

\begin{definition}\label{def:CLECC} \textbf{(Cross-Layer Edge Clustering Coefficient)} Given a parameter $\alpha$, the MIN-Neighbors of vertex $v$, $N(v,\alpha)$ are the set of vertices that are adjacent to $v$ in at least $\alpha$ graphs. The Cross-Layer Edge Clustering Coefficient of two vertices $u,v \in V$ measures the ratio of their common neighbors to all their neighbors.
\begin{equation}\label{eq:CLECC}
CLECC(u,v,\alpha) = \frac{|N(u,\alpha) \cap N(v,\alpha)|}{|N(u,\alpha) \cup N(v,\alpha) 
\setminus \{u,v\}|}.
\end{equation}
\end{definition}

A pair of vertices in a multiplex of social networks with low CLECC suggests that the individuals do not share a common clique of friends through at least $\alpha$ social networks. Therefore it is unlikely that they form a community together. 

\subsection{Densely Connected Community Cores}
Communities detection is the process to partition the vertices such that every vertex belongs in at least a community. However there are cases where one just desires some substructures in a multiplex with certain properties. For example a \emph{Dense Connected Community Core} is the set of vertices such that they are in the same community for all the auxiliary-partitions \cite{yin2013}.

\section{Theoretical Bounds}
The Max-Cut problem finds a partition of a graph such that the number of edges induced across the clusters are maximized. Thus the same partition over the complement graph minimizes the number of edges between the clusters. This is known as the Balanced-Min-Cut problem and it is closely related to our communities detection problem, where the number of edges induced between the communities are minimized. 

Therefore to extend our understanding for multiplex, we begin with a known result for the Max-Cut problem. It allows us to prove a corollary for the Balanced-Min-Cut problem on multiplex.

\subsection{Maximum Cut Problem on Multiplex}
\begin{theorem}\label{maxkcut}
Consider graph $G^1, \dots, G^m$ on the same vertex set $V$. There exists a k-partition of $V$ into $k \geq 2$ communities $C_1,\dots,C_k$ such that for all $i=1,\dots,m$:
\begin{equation}\label{maxcut}
\mbox{\# edges cut in $G^i$} \geq \frac{(k-1)|E_i|}{k} - \sqrt{2m|E_i|/k}.
\end{equation}
\end{theorem}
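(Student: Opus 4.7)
The plan is a probabilistic-method argument. I would construct the partition by assigning each vertex $v \in V$ independently and uniformly at random to one of the $k$ classes $C_1,\ldots,C_k$; the goal is to show that with positive probability the resulting partition simultaneously satisfies the bound in Eq.~\eqref{maxcut} for every layer $G^i$.

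First I would fix $i$ and let $X_i$ denote the number of edges of $G^i$ cut by the random partition, written as $X_i = \sum_{e \in E_i} Y_e$ with $Y_e$ the indicator that edge $e$'s endpoints lie in distinct classes. Since each $Y_e$ has mean $(k-1)/k$, linearity of expectation gives $\mathbb{E}[X_i] = (k-1)|E_i|/k$, matching the leading term of the target inequality. The deviation term must then come from a concentration estimate.

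Next I would bound $\mathrm{Var}(X_i)$. The key observation is that for any two distinct edges $e,f \in E_i$ the covariance $\mathrm{Cov}(Y_e, Y_f)$ vanishes: if $e$ and $f$ are vertex-disjoint this is immediate from independence of the class assignments, and if $e=(u,v)$, $f=(v,w)$ share exactly one endpoint, then conditioning on the class of $v$ shows $\mathbb{P}(Y_e = Y_f = 1) = ((k-1)/k)^2 = \mathbb{E}[Y_e]\mathbb{E}[Y_f]$, because $c(u)$ and $c(w)$ are independent of $c(v)$ and of each other. Hence $\mathrm{Var}(X_i) = \sum_{e \in E_i} \mathrm{Var}(Y_e) = |E_i|(k-1)/k^2 \leq |E_i|/k$. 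Applying Chebyshev's inequality with deviation $t_i = \sqrt{2m|E_i|/k}$ yields
\begin{equation*}
\mathbb{P}\!\left(X_i < \tfrac{(k-1)|E_i|}{k} - \sqrt{\tfrac{2m|E_i|}{k}}\right) \leq \frac{\mathrm{Var}(X_i)}{t_i^2} \leq \frac{1}{2m}.
\end{equation*}
A union bound over the $m$ layers then gives failure probability at most $1/2$, so a partition meeting all $m$ bounds simultaneously exists.

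I expect the only subtle point to be the covariance calculation, since one's reflex is that edges sharing a vertex should be correlated; the proof hinges on noticing that the uniform product measure on class assignments kills even this covariance, so the naive sum-of-variances bound is tight enough to beat the union bound. Once this is in hand, the rest is bookkeeping with Chebyshev and the choice of the deviation parameter $t_i$.
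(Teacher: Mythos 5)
Your argument is correct, and it is the same method the paper uses: a random $k$-partition, a second-moment bound, Chebyshev's inequality applied per layer with deviation $\sqrt{2m|E_i|/k}$, and a union bound giving total failure probability below $1$. Your covariance computation is right --- under independent uniform class assignment the cut indicators of two edges sharing a vertex are uncorrelated, so $\mathrm{Var}(X_i)=|E_i|(k-1)/k^2\le |E_i|/k$ --- and your version is in fact tidier than the paper's sketch, which only treats $k=2$ explicitly and asserts the general case is similar. The one substantive difference is the random model: the paper draws a uniformly random \emph{balanced} partition (equal-sized classes), whereas you assign vertices independently and uniformly, which does not produce equal-sized classes. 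For Theorem \ref{maxkcut} as stated this is immaterial, but the paper's proof of Corollary \ref{balminkcut} explicitly leans on the partition being balanced (``the proof in Theorem \ref{maxkcut} ensures that the communities are equal in size''), so your construction would not transfer to that corollary without an extra step --- either switching to the balanced-partition model (where the cut probability is $\tfrac{n}{2(n-1)}\ge\tfrac12$ and the covariances need a slightly more careful, though still benign, computation) or arguing that the independent assignment is near-balanced with high probability and can be rebalanced at small cost to the cut counts.
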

\begin{proof}
Sketch \cite{KuhnO07}: The proof for $k \geq 3$ is similar to the case for $k=2$, hence we will only prove the latter. WLOG, the vertex set of $G^1$ is partitioned into 2 equal sets, $A$ and $B$. Define $X_i$ be an indicator function where $X_i=1$ if the $i^{th}$ edge is induced from $A$ to $B$, $X_i=0$ if otherwise. Since $Pr(X_i)=1/2$ and with linearity of expectation, we get $\mathbb{E}[X_i]=|E_i|/2$ and $\mathbb{E}[X_i^2]=|E_i||E_i+1|/4$. We can now by use of Chebyshev's Inequality get the probability that a given partition fail the inequality \ref{maxcut} for $G^1$. Since the probability sum of all graphs to fail is less than 1, thus there exists a partition such that all the graphs fulfill inequality \ref{maxcut}.
\end{proof}

Further developments on the bounds were made by imposing additional conditions where the maximum degree is bounded \cite{KuhnO07} and in cases where $k=m=2$ \cite{Patel08}. Since the solution for Max-Cut on graphs is NP-complete, the extension to simultaneously Max-Cut all the graphs in a multiplex is naturally NP-complete too. Thus this also implies that balanced minimum bisection is NP-complete too \cite{Garey1990}. 

\subsection{Balanced Minimum Cut Problem on Multiplex}
\begin{corollary}\label{balminkcut}
Consider graph $G^1, \dots, G^m$ on the same vertex set $V$. There exists a k-partition of $V$ into $k \geq 2$ equal-sized communities $C_1,\dots,C_k$ (i.e. $|C_i|\approx|C_j|$) such that for all $i=1,\dots,m$:
\begin{equation}\label{minkcut}
\mbox{\# edges cut in $G^i$} \leq \frac{(k-1)|\bar{E}|}{k} - \sqrt{2m|\bar{E}|/k},
\end{equation}
where $|\bar{E}| = {n \choose 2} - |E_i|.$
\end{corollary}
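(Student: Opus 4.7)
The plan is to obtain the upper bound on a minimum cut by applying Theorem \ref{maxkcut} to the \emph{complement} multiplex and then exploiting the fact that, for any fixed partition, every inter-cluster vertex pair contributes exactly one edge either to $G^i$ or to its complement $\bar{G^i}$. So the cut in $G^i$ and the cut in $\bar{G^i}$ are complementary quantities whose sum is a constant depending only on the partition sizes.

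First I would define, for each $i=1,\ldots,m$, the complement graph $\bar{G^i}=(V,\bar{E_i})$ with $|\bar{E_i}|=\binom{n}{2}-|E_i|$. Applying Theorem \ref{maxkcut} to the multiplex $\{\bar{G^1},\ldots,\bar{G^m}\}$ yields a common $k$-partition $C_1,\ldots,C_k$ that simultaneously cuts at least $\tfrac{(k-1)|\bar{E_i}|}{k}-\sqrt{2m|\bar{E_i}|/k}$ edges in every $\bar{G^i}$. Second, for this fixed partition I would compute the total number of inter-cluster pairs, which, under the equal-sized assumption $|C_i|\approx n/k$, equals
\begin{equation}
S_k \;=\; \binom{n}{2}-k\binom{n/k}{2} \;=\;\frac{n^{2}(k-1)}{2k}.
\end{equation}
Since every such pair is either an edge of $G^i$ or an edge of $\bar{G^i}$, we have the identity (\#\,edges cut in $G^i$) $+$ (\#\,edges cut in $\bar{G^i}$) $=S_k$, and rearranging immediately turns the lower bound on the complement cut into an upper bound on the cut of $G^i$.

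The key subtlety I would have to confront is the equal-sized requirement, since Theorem \ref{maxkcut} as stated only guarantees existence of \emph{some} $k$-partition, not a balanced one. Fortunately the sketched proof of Theorem \ref{maxkcut} already starts from a balanced bipartition (``WLOG, the vertex set of $G^1$ is partitioned into $2$ equal sets $A$ and $B$''), and the second-moment/Chebyshev argument never relies on allowing imbalance. I would therefore re-invoke the probabilistic argument restricted to the uniform distribution over balanced $k$-partitions, verifying that the expectations $\mathbb{E}[X_i]$ and $\mathbb{E}[X_i^2]$ used in that sketch remain of the same order, so that the union bound still leaves positive probability of success.

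The main obstacle, beyond confirming that the balanced-partition restriction preserves the probabilistic argument, is reconciling the algebraic form of the resulting bound with the expression in the statement: after substitution one naturally arrives at $S_k-\tfrac{(k-1)|\bar{E_i}|}{k}+\sqrt{2m|\bar{E_i}|/k}$, and I would need to verify that, modulo lower-order terms in $n$ (from the discrepancy between $S_k=\tfrac{n^2(k-1)}{2k}$ and $\tfrac{(k-1)}{k}\binom{n}{2}$), this coincides with the claimed bound $\tfrac{(k-1)|\bar{E}|}{k}-\sqrt{2m|\bar{E}|/k}$; any residual gap would be absorbed into the approximate equality $|C_i|\approx |C_j|$ that already appears in the statement.
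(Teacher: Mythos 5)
Your overall route is the same as the paper's: complement each layer, apply Theorem \ref{maxkcut} to $\{\bar{G}^1,\dots,\bar{G}^m\}$, and convert the simultaneous lower bound on the complement cuts into an upper bound on the cuts of the original graphs. You are in fact more explicit than the paper on the two points its one-line proof leaves tacit, namely the identity $(\#\text{ edges cut in } G^i) + (\#\text{ edges cut in } \bar{G}^i) = S_k$ for a fixed balanced partition, and the observation that the probabilistic argument behind Theorem \ref{maxkcut} already works over balanced partitions, so the equal-size requirement costs nothing. Both of those points are sound and are exactly what the paper is implicitly relying on.

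The genuine gap is in the final step you defer. After complementation you obtain $\#\text{ edges cut in } G^i \le S_k - \frac{(k-1)|\bar{E}_i|}{k} + \sqrt{2m|\bar{E}_i|/k}$, and with $S_k = \frac{n^2(k-1)}{2k}$ and $|\bar{E}_i| = {n \choose 2} - |E_i|$ this simplifies to $\frac{(k-1)|E_i|}{k} + \frac{(k-1)n}{2k} + \sqrt{2m|\bar{E}_i|/k}$: a bound governed by $|E_i|$, with a \emph{plus} sign on the square-root term. The claimed bound $\frac{(k-1)|\bar{E}|}{k} - \sqrt{2m|\bar{E}|/k}$ is governed by $|\bar{E}_i|$ with a \emph{minus} sign. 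Their difference is $\frac{k-1}{k}\bigl(2|E_i| - {n \choose 2}\bigr) + \frac{(k-1)n}{2k} + 2\sqrt{2m|\bar{E}_i|/k}$, which is $\Theta(n^2)$ whenever the density of $G^i$ is bounded away from $1/2$; it is not a lower-order term and cannot be absorbed into the $|C_i|\approx|C_j|$ slack. Consequently your argument establishes the stated inequality only for layers sparse enough that the (then rather weak) right-hand side is implied a fortiori, and for dense layers it does not deliver the stated bound at all. To be fair, this is not a defect you introduced: the paper's proof says only that one should ``substitute $|\bar{E}_i|$ into the result'' and that ``the expression in the corollary follows,'' which silently skips precisely this reconciliation; your more careful write-up exposes that the substitution does not by itself produce the claimed right-hand side.
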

\begin{proof}
Let $\bar{G}^i$ be the complement graph of $G^i$, and its edge set is denoted by $\bar{E}_i$. Since the maximum number of edges in a graph is ${n \choose 2}$, hence $|\bar{E}_i| = {n \choose 2} - |E_i|$. Apply (Max-Cut) Theorem \ref{maxkcut} on the set of complement graphs $\bar{G}_i$ and substitutes $|\bar{E}|_i$ into the result, the expression in the corollary follows. The proof in Theorem \ref{maxkcut} ensures that the communities are equal in size.
\end{proof}

A partition that fulfills Eq. \ref{minkcut} is not necessary a good community defined in Section \ref{sec:quality_community}, vice versa. However the edges induced between partition classes are often perceived as bottlenecks when information flows through the network/multiplex. They are similar to the bridges between cities and communities. Therefore Communities Detection Algorithms tend to minimize the number of edges between different communities.

\section{Communities Detection Algorithms for Multiplex}\label{sec:CDAlgo}
The general strategy for existing communities detection algorithm for multiplex is to extract features from the multiplex and reduce the problem to a familiar representation. In solving the reduced representation, the multiplex-communities are then deduced from the auxiliary solutions of the reduced problems.

Therefore many multiplex algorithms rely on existing monoplex-communities detection algorithms to get the auxiliary-partitions for the interim steps. The choice of algorithms is often independent of their extension for multiplex, and hence any communities detection algorithm in theory can be chosen to solve the interim steps. In this paper our experiments used Louvain Algorithm to generate the auxiliary-partitions.

\subsection{Projection}\label{sec:projection}
The naive method is to projected the multiplex into a weighted graph. I.e. let $A^i$ be the adjacency matrix of $G^i \in \mathcal{G}$. The adjacency matrix of the weighted projection of $\mathcal{G}$ is given by $\bar{A} = \frac{1}{m}\sum_{i=1}^m A^i$. We will call this the ``Projection-Average" of multiplex.

It was been independently proposed as a baseline for more sophisticated multiplex algorithms as the performance is often ``sub-par" \cite{Berlingerio11_2,kazienko2011multidimensional,Nefedov11,DavisLC11,RossettiBG11}. In our experiments we will compare this with the unweighted variant, that is the ``Projection-Binary" of a multiplex, i.e. $G(V,E_1 \cup \ldots \cup E_m)$.

An alternative weight assignment between vertex pair is to consider the connectivity of their neighbors, where a high ratio of common neighbors implies stronger ties \cite{Berlingerio11_2}. This is based on the idea that members of the same community tend to interact over the same subset of relations, which was independently proposed by Brodka et. al in Def. \ref{def:CLECC} \cite{brodka12_2}. This alternative will be known as ``Projection-Neighbors".

\subsection{Consensus Clustering}\label{sec:CCPAlgo}
The previous strategy aggregates the graphs first, and then it performs the communities detection algorithm over the resultant graph. It is a poor strategy as it neglects the rich information of the dimensions \cite{tang2009}. Therefore the Consensus Clustering strategy is to first apply the communities detection algorithm on the graphs separately as auxiliary partitions, and then the principal clustering (multiplex communities) is derived by aggregating these auxiliary partitions in a meaningful manner.

The key concept behind consensus clustering is to measure the frequency with which two vertices are found in the same community among the auxiliary partitions. Vertices that are frequently in the same monoplex-community are more likely to be in the same multiplex-community. Therefore the communities detection algorithm on the individual graphs on the multiplex determines the structural properties of multiplex-communities, whereas Consensus Clustering determines the relational properties of the multiplex-communities.

\subsubsection{Frequent Closed Itemsets Mining}
Data-mining is to find a set of items that occurs frequently together in a series of transactions. For example items like milk, cereal and fruits are frequently bought together in supermarkets based on a series of sales transactions. These sets are known as \emph{itemsets}. Berlingerio et. al translates the Consensus Clustering of the auxiliary-partitions as a data-mining problem to discover multiplex-communities \cite{berlingerio2013abacus}.

The vertices in the multiplex defines the $|V|$ \emph{transactions} for the data-mining, and the items are tuples $(c,d)$ where the respective vertex belongs in monoplex-community $c$ in dimension $d$. For example suppose vertex $v_i$ belongs to monoplex-communities $c_1,c_5$ and $c_2$ in dimensions $d_1,d_2$ and $d_3$ respectively. The $i^{th}$ transaction is the set of items $\{(c_1,d_1),(c_5,d_2),(c_2,d_3)\}$. Therefore when we use data-mining methods like \emph{Frequent Closed Itemsets Mining}, we are able to identify the frequent (relative to a predefined threshold) itemsets as multiplex-communities.

For example each vertex is a customer's transaction in a supermarket, and a community is a target market that the supermarket wants to discover. It is only meaningful if the target market is sufficiently large, and thus we need to defined the minimum community size (e.g. 10). In this case a customer's transaction is his auxiliary-communities membership. Therefore Frequent Closed Itemsets Mining will extract a multiplex-community on at least e.g. 10 vertices with which each customer's transaction is a subset of the target market's itemsets.

\subsubsection{Cluster-based Similarity Partitioning Algorithm}
 \emph{Cluster-based Similarity Partitioning Algorithm} averages the number of instances vertex pairs are in the same auxiliary-communities. For example in a multiplex with 5 dimensions, if there are 3 instances where vertices $v_i$ and $v_j$ are in the same auxiliary-community, then the similarity value of vertex pair $(v_i,v_j)$ is $3/5$.

Once the similarity is measured for all the vertex pairs, the principal cluster is determined with k-means clustering --- vertices with the closest similarity at each iteration are grouped together. Therefore vertex pairs that are frequently in the same auxiliary-communities will have high similarity value, and hence more likely to be clustered together in the same principal-community. This is known as Partition Integration by Tang et. al \cite{tang2009}.

\subsubsection{Generalized Canonical Correlations}
Each of the auxiliary-partitions maps the vertices as points in a $l$-dimensional (this dimension is independent to the dimensions of a multiplex) Euclidean space. The points are positioned in a way that the shorter the shortest path between two vertices are, the closer they are in the Euclidean space. One of such mapping can be achieved by  concatenating the top eigenvectors of the adjacency matrix. Thus given $d$ graphs in a multiplex, there are $d$ structural feature matrices $S^i$ of size $l \times n$ where the column in each matrix is the position of a vertex in the $l$-dimensional Euclidean space.

Tang et. al wants to aggregate the structural feature matrices to a principal structural feature matrix $\bar{S}$ such that the principal partition can be determined from $\bar{S}$ \cite{tang2009}. The ``average"  $\bar{S} = \frac{1}{d}\sum^{d}_{i=1}S^{(i)}$ however does not result in sensible principal structural feature matrix since the matrix elements between $S^{(i)}$ and $S^{(j)}$ are independent.

A solution to fix this problem is to transform the $S^{(i)}$ such that they are in the same space and their ``average" is sensible. That is the same vertex in the $d$ different Euclidean spaces are aligned in the same point in a Euclidean space. Specifically we need a set of linear transformations $w_i$ such that they maximize the pairwise correlations of the $S^{(i)}$, and Generalized Canonical Correlations Analysis is one of such standard statistical tools \cite{Kettenring}. This allows us to ``average" the structures in a more sensible way:
\begin{equation}
\bar{S} = \frac{1}{d}\sum^{d}_{i=1}S^{(i)}w_i.
\end{equation}

Finally the principal partition is determined via k-mean clustering of the principal feature matrix $\bar{S}$.

\subsection{Bridge Detection}\label{algo:bridge}
A bridge in a graph refers to an edge with high information flow, like the busy roads between two cities, where the absence of these roads separates the cities into isolated communities. One way to do this is to project the multiplex $\mathcal{G}$ to a weighted network and determine the bridges from the projection. Alternatively one can remove them by the definition of a multiplex-bridge to get the desired partitions. 

In social networks, strong edge ties are desirable within the communities. Hence to identify weak ties between vertex pairs, Brodka et. al proposed CLECC (Eq. \ref{eq:CLECC}) as a measure. At each iteration, all connected vertex pairs are recomputed and the pair with the lowest CLECC score will be disconnected in all the graphs. The algorithm halts when the desired number of communities (components) are yielded greedily \cite{brodka12_2}. This is the same strategy presented by Girvan and Newman, where the bridges of a graph were identified by their betweenness centrality score \cite{girvan2002community}. In the experiments, we set $\alpha = |\mathcal{G}|/2$ for the CLECC score in Eq. \ref{eq:CLECC}.

\subsection{Tensor Decomposition}\label{sec:tensor}
Algebraic Graph Theory is a branch of Graph Theory where algebraic methods like linear algebra are used to solve problems on graph. Hence the natural representation for a multiplex is a $3^{rd}$-order tensor (as a multidimensional array) instead of a matrix ($2^{nd}$-order tensor). The set of $m$ graphs in a multiplex is a set of $m$ $n \times n$ adjacency matrices, which can be represented as a $m \times n \times n$ multidimensional array (tensor) \cite{Rodriguez201029}. This allows us to leverage on the available tensor arithmetics like tensor decomposition.

Tensor decompositions are analogues to the singular value decomposition and 'Lower Upper' decomposition in matrices, where they express the tensor into simpler components. For example a PARAFAC tensor decomposition \cite{harshman1970fpp} is the rank-k approximation of a tensor $\mathcal{T}$ as a sum of rank-one tensors (vectors $\bar{u}^{(i)}$, $\bar{v}^{(i)}$ and $\bar{w}^{(i)}$), i.e.:
\begin{equation}\label{Eq:PARAFAC}
\mathcal{T} \approx \sum^k_{i=1} \bar{u}^{(i)} \circ \bar{v}^{(i)} \circ \bar{w}^{(i)}.
\end{equation}
where $\bar{a} \circ \bar{b}$ denotes the vector outer product. The components in the $i^{th}$ factor,  $\bar{u}^{(i)}$, $\bar{v}^{(i)}$ and $\bar{w}^{(i)}$, suggest that there are strong ties (possibly a cluster/community) between the elements in $\bar{u}^{(i)}$ and $\bar{v}^{(i)}$ via the dimension in the top component in $\bar{w}^{(i)}$.

For instance suppose the $j^{th}$ element in $\bar{w}^{(i)}$ is the most largest element. This suggests that in the $i^{th}$ community, the top 10 (or any predefined threshold) elements in $\bar{u}^{(i)}$ are in the same cluster as the top 10 elements in $\bar{v}^{(i)}$ via the $j^{th}$ dimension/relationship \cite{dunlavy,Judith,Andri,papalexakis}.

\section{Benchmark Multiplex}\label{sec:benchmark}
A \emph{Erd\H{o}s-R\'{e}nyi Graph} is a graph where vertex pairs are connected with a fixed probability \cite{Erdos60}. The random nature of this construction usually does not have any meaningful communities structures in them. Hence it is not ideal to use as a benchmark graph for Communities Detection Algorithms. 

Therefore a benchmark graph should be similar to the Girvan and Newman Model where some random edges are induced between a set of dense subgraphs (as communities) to form a single connected component/graph. The set of dense subgraphs acts as ``ground-truth" communities of the graph for a Communities Detection Algorithm to discover and hence a benchmark for Communities Detection Algorithms. The goal of this section is to design similar benchmarks for multiplex.

The main challenge is that there is not yet a universally accepted definition of a good multiplex community. Hence there is no methodology for us to construct a benchmark such that it does not favor certain algorithms. Therefore the objective of the following benchmark graphs is \textbf{to study the relations between these multiplex-communities detection algorithms}. This allows us to use a collection of highly uncorrelated algorithms to study different perspectives of a multiplex-community.

\subsection{Unstructured Synthetic Random Multiplex}
The simplest construction of a random multiplex is to generate a set of independent graphs on the same vertex set. However many communities detection algorithms are based on some observations of real world multiplexes and these algorithms do not yield interesting results on such random construction. We denote such random multiplexes as \emph{Unstructured Synthetic Random Multiplex} (USRM), and they are analogous to Erd\H{o}s-R\'{e}nyi Graphs where Communities Detection Algorithms should not find any meaningful communities in them.

For simplicity in the numerical experiments there are only two dimensions in the multiplexes such that we can easily generate all six combinations of Erd\H{o}s-R\'{e}nyi, Watts Strogatz \cite{watts1998collective} and Barab\'{a}si-Albert graphs \cite{albert2002statistical}. The statistical properties of such construction can be found in \cite{loe13}.

\begin{tabular}{lll}
\centering
Name & Graph 1 & Graph 2 \\ \hline
USRM1     & Erd\H{o}s-R\'{e}nyi  &  Erd\H{o}s-R\'{e}nyi       \\
USRM2     & Erd\H{o}s-R\'{e}nyi  &  Watts Strogatz       \\
USRM3     & Erd\H{o}s-R\'{e}nyi  &  Barab\'{a}si-Albert      \\
USRM4     & Watts Strogatz        &  Watts Strogatz       \\
USRM5     & Watts Strogatz        &   Barab\'{a}si-Albert      \\
USRM6     & Barab\'{a}si-Albert  &  Barab\'{a}si-Albert \\
&&
\end{tabular}

In the experiments the size of the vertex set is 128. It is important that the number of edges in both graphs are equal so that neither dominates the interactions of the multiplex. To ensure the Erd\H{o}s-R\'{e}nyi  graph is connected with high probability, the probability that a vertex pair is connected has to be $> \ln{|V|}/|V|$. Therefore the number of edges in the Erd\H{o}s-R\'{e}nyi graph (as well as the other graphs in the multiplex) is ${128 \choose 2}2\ln{128}/128 \approx 616$.

\subsection{Structured Synthetic Random Multiplex}
This construction is similar to the Girvan and Newman Model where independently generated communities are connected in a way such that the ``ground-truth" communities remains. However we saw above that there are different perspective of multiplex communities and we want to encapsulate these ideas into a multiplex benchmark. 

\emph{Structured Synthetic Random Multiplex} (SSRM) is a construction where different multiplex-partitions give high quality multiplex-communities defined by different definitions. However at the same time the partition has to be of ``less-than-ideal" quality from the perspective of other definitions of multiplex-communities.

We begin by generating multiplex-communities, each of which are of high quality with respect to different independent definitions of communities. Next we modify these multiplex-communities such that it remains good in one of the three multiplex-communities definitions \textbf{and} poor quality by the other definitions. The final step is to combine these multiplex-communities into a single multiplex as our SSRM benchmark (Fig. \ref{fig:SSRM}).

Fig. \ref{fig:SSRM} shows $\{[c_1,c_2],[c_3,c_4]\}$ as a partition with 2 communities where clusters 1 and 2 form a community and cluster 3 and 4 form the second community. This partition has high redundancy, low modularity and low CLECC multiplex-communities. There is another partition $\{[c_1,c_3],[c_2,c_4]\}$ where the communities have high modularity, but low in the remaining metrics. Lastly $\{[c_2,c_3],[c_1,c_4]\}$ is the final partition where communities have high CLECC vertex pairs. This synthetic multiplex expresses the multi-perspective nature of a multiplex communities.

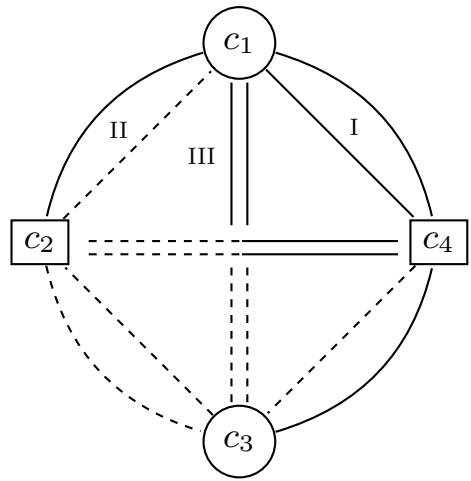
\begin{figure}
  \centering
\begin{tikzpicture}[>=stealth',shorten >=1pt,auto,node distance=2.5cm,thick]

  \tikzstyle{c_node}=[circle, draw,scale=1.5]
  \tikzstyle{r_node}=[rectangle, draw,scale=1.5]

  \node[c_node] (1) {$c_1$};
  \node[r_node] (2) [below left of=1] {$c_2$};
  \node[c_node] (3) [below right of=2] {$c_3$};
  \node[r_node] (4) [below right of=1] {$c_4$};
  \node[draw=none,fill=none] at (-0.5,-1.5) {III};

  \path[thick]
    (1) edge node {I} (4)
    (1) edge [bend left] node {} (4)
    (1) edge [bend right] node {II} (2)
    (3) edge [bend right] node {} (4);

  \path[thick,dashed]
    (2) edge node [right] {} (1)
    (2) edge [bend right] node {} (3)
    (3) edge [right] node {} (2)
    (4) edge node [left] {} (3);

   \draw (-3pt,-15pt) -- (-3pt,-70pt);
   \draw [dashed] (-3pt,-85pt) -- (-3pt,-140pt);
   \draw (3pt,-15pt) -- (3pt,-70pt);
   \draw[dashed] (3pt,-85pt) -- (3pt,-140pt);
        
   \draw (60pt,-80pt) -- (0pt,-80pt);
   \draw [dashed] (0pt,-80pt) -- (-60pt,-80pt);
   \draw (60pt,-75pt) -- (0pt,-75pt);
   \draw[dashed] (0pt,-75pt) -- (-60pt,-75pt); 
\end{tikzpicture}
    \caption{To aid visualization, the edges in this two dimensional SSRM are drawn with solid and dashed lines. Let clusters $c_1$ and $c_3$ be dense subgraphs where there are more solid edges than dashed edges. Similarly $c_2$ and $c_4$ are dense subgraphs with more dashed edges than solid edges. \textbf{I:} The solid edges between $c_1$ \& $c_4$ implies that there are only solid edges between them. This applies the same to the dashed edges between $c_2$ \& $c_3$. \textbf{II:} All the edges between $c_1$ and $c_2$ overlaps. \textbf{III:} None of the edges between $c_1$ \& $c_3$ (or $c_2$ \& $c_4$) overlaps. We denote $\{[c_1,c_2],[c_3,c_4]\}$ as a partition with 2 communities where clusters 1 and 2 form a community and cluster 3 and 4 form the second community. This partition has high redundancy, low modularity and low CLECC multiplex-communities.\label{fig:SSRM}}
\end{figure}

\subsubsection{High Modularity, Low Redundancy \& Low CLECC Multiplex-Communities}
To create a high modularity multiplex-community, we begin with cluster 1 and cluster 3 as a clique-like structure in both dimensions. For these clusters to be low in redundancy, we have to remove some edges in both clusters such that there are very few overlapping edges in the clusters while maintaining high modularity. This can be done by making the first dimension graph in both clusters to be the complement of the graph in the second dimension.

The next step is to add edges between cluster 1 and 3 such that the resultant cluster is a single connected component. We denote $[c_1,c_3]$ as a component connected by cluster 1 and cluster 3. To maintain a low redundancy, the new edges cannot overlap.

Finally we need to tweak the clusters such that the CLECC score is low between a significant number of the vertex pairs in the combined component of cluster 1 and 3. Specifically we want the vertex pairs connected by the new edges in the previous step to have low CLECC scores. This is possible if cluster 1 has more edges in the first dimension whereas cluster 3 has more edges in the second dimensions. In doing so the neighbors of the vertex in cluster 1 will be significantly different from the neighbors of vertex in cluster 3, thus a low CLECC score.

The same construction applies to cluster 2 and cluster 4, where cluster 2 is similar to cluster 3 and cluster 4 is similar to cluster 1. 

\subsubsection{High CLECC, Low Modularity \& Low Redundancy Multiplex-Communities}
Since all 4 clusters do not have overlapping edges, the redundancy remains low for any partition on the multiplex. Therefore the first step is to make cluster 1 and 4 to be high in CLECC score, and apply the same construction for cluster 2 and 3.

Since cluster 1 and 4 are similarly based on the construction from the previous subsection, the neighbors of any given vertex in each cluster will be similar too. Therefore by adding new edges between cluster 1 and 4 will not affect the CLECC score. However these new edges should \emph{only} be drawn in the first dimension, since it is the dominant dimension in $[c_1,c_4]$. This simultaneously reduces the modularity of $[c_1,c_4]$ in the second dimension, since the clusters are not connected and the graph in the second dimension is sparse. This gives a low modularity for multiplex communities while maintaining the high CLECC score.

The construction is similar for cluster 2 and 3, expect that only edges in the second dimension connects both clusters together.

\subsubsection{High Redundancy, Low Modularity \& CLECC Multiplex-Communities}
Given $[c_1,c_3]$ have low CLECC score, the same score should apply for $[c_1,c_2]$ since cluster 2 and cluster 3 are similar. The main goal is to connect cluster 1 and cluster 2 such that the redundancy of $[c_1,c_2]$ is high. Redundancy is measured by Eq. \ref{def:redundancy}, where it counts the number of edges that overlaps. Since there is no overlapping edges at this point of the construction, the simplest way to increase the redundancy is to add new overlapping edges between clusters to form the components $[c_1,c_2]$ and $[c_3,c_4]$.

Although $[c_1,c_2]$ and $[c_3,c_4]$ have relatively high redundancy as compare to other partition in the multiplex, it can still have lower redundancy than a random community in USRM1. To nudge the redundancy higher, it is necessary to add new edges such that there are overlaps in the four clusters. However this might increase the modularity of $[c_1,c_2]$ which we want to avoid. Therefore this final step has to be done incrementally.

\subsubsection{Evaluation of the different ground truth partitions}
There are 3 different ``ground-truth" partitions $\{[c_1,c_2],[c_3,c_4]\}$, $\{[c_2,c_3],[c_1,c_4]\}$ and $\{[c_1,c_3],[c_2,c_4]\}$, where they each represents a different ``ideal-partition". However simultaneously they are ``less-than-ideal" from the perspective of the other metrics. Although some of these metrics are correlated (from our experiments) in general, these ground-truth partitions of SSRM show that these metrics are each able to capture essential aspects of the community structure.

\begin{table}[h]
\begin{tabular}{lrrr} 
   & Redundancy & *CLECC & **Modularity \\ \hline
$[c_1,c_2]$ & \textbf{0.0492}     & 0.1142    & -0.0287     \\
$[c_3,c_4]$ & \textbf{0.0537}     & 0.1087    & -0.0332     \\ \hline
$[c_2,c_3]$ & 0          & \textbf{0.1541}    & 0.007       \\
$[c_1,c_4]$ & 0          & \textbf{0.1642}    & 0.012       \\ \hline
$[c_1,c_3]$ & 0          & 0.1113    & \textbf{0.0317}      \\
$[c_2,c_4]$ & 0          & 0.1083    & \textbf{0.0245}      \\ \hline
Random & 0.0217     & 0.1056    & 0.0033     
\end{tabular}
\caption{The rows (except the last row) are paired up such that they implies a common partition. For example the first two rows are communities of the partition $\{[c_1,c_2],[c_3,c_4]\}$. The redundancy and CLECC score of community $[c_1,c_2]$ are 0.0492 and 0.1142 respectively. \textbf{*CLECC:} The CLECC score is the average CLECC score between all vertex pairs in the community. \textbf{**Modularity:} The two values in the partition refers to the modularity of the two graphs in the multiplex. E.g. partition $\{[c_1,c_2],[c_3,c_4]\}$ has modularity -0.0287 and -0.0332 for the first and second dimensions of SSRM respectively. A partition is ``less-than-ideal" if its measurement is closer to a random partition (last row) than the maximum (values in bold).
\label{table:SSRM}}
\end{table}

For example Table \ref{table:SSRM} shows that the partition $\{[c_1,c_2],[c_3,c_4]\}$ has communities with the maximum redundancy. However it has multi-modularity and CLECC scores similar to a random partition, suggesting that it is ``less-than-ideal" relative to other metrics.

\subsection{Real World Multiplex}
The issue with synthetic multiplex is that it does not reflect the real-world systems correctly. The relationships between vertices are artificially imposed such that we can distinguish the different communities detection algorithms. However real-world communities do not behave in such a systematic manner and might approach similar structure and relationships even if they are created by different dynamics. Therefore we compare the multiplex-communities detection algorithms with real-world multiplexes from open dataset.

\subsubsection{Youtube Social Network}
Youtube is a video sharing website that allows interactions between the video creators and their viewers. Tang et al. collected 15,088 active users to form a multiplex with 5 relationships where two users are connected if \cite{tang2009}:
\begin{enumerate}
  \item they are in the contact list of each other;
  \item their contact list overlaps;
  \item they subscribe to the same user's channel;
  \item they have subscription from a common user;
  \item they share common favorite videos.
\end{enumerate}

\subsubsection{Transportation Network}
Generalized graphs in transportation networks are known as multimodal network, where bus stops, train stations and terminals are indistinguishable locations (vertices) to transit to a different transport. Cardillo et al. constructed an air traffic multiplex from the data of European Air Transportation (EAT) Network with 450 airports as vertices \cite{cardillo2013emergence}. An edge is drawn between two vertices if there is a direct flight between them. Each of the 37 distinct airlines in the EAT Network forms a relationship between airports.

Given that the base of airlines are often in older and major airports, hence there is the ``rich-gets-richer" phenomenon where new airports tends to have direct flight to major airports. Therefore Cardillo et al. verified the phenomenon by ensuring that the degree distribution follows a power-law distribution, implying a scale-free system.

\section{Comparing Partitions}
Normalized Mutual Information (NMI) \cite{manning2008introduction} is a popular similarity metric for the network partition, where a real-value score between $[0,1]$ with score 1 implying identical partition. However we are not able to simply apply the NMI metric to all pairwise comparisons of all the algorithms.

Unlike the rest of the algorithms, \emph{Frequent Closed Itemsets Mining} and \emph{Tensor Decomposition} yield overlapping communities, which NMI was not designed to measure. Furthermore these algorithms also do not ensure that all vertices belong in at least one community, i.e. there might exists ``isolated" vertices with no community membership. Therefore even variants of NMI \cite{lancichinetti2009detecting} for overlapping communities are not applicable in our study.

Adjusted Rand Index is an alternative metric to NMI in classification research \cite{hubert1985comparing}. Its extension to measure overlapping communities is known as Omega Index \cite{collins1988omega} and is able to measure partitions with isolated vertices. Unlike overlapping variants of NMI, the comparison of two non-overlapping partitions with Omega Index reduces to Adjusted Rand Index.

\subsection{Normalized Mutual Information}
Mutual Information $I(\mathcal{A};\mathcal{B})$ measures the information of the communities-membership of all vertex-pairs  in $A$ given the communities-membership in $B$, vice versa. Roughly speaking given $\mathcal{B}$, how well can we guess that a vertex pair is in the same community in $\mathcal{A}$? Formally this is defined as $I(\mathcal{A};\mathcal{B}) = H(\mathcal{A}) - H(\mathcal{A}|\mathcal{B})$ where $H(\mathcal{A})$ is the Shannon entropy:
\begin{equation}\label{NMI_den}
H(\mathcal{A}) = -\sum_k P(a_k) \log P(a_k).
\end{equation}
where $P(a_k)$ is the probability of a random vertex is in the $k^{th}$ community of partition $\mathcal{A}$, i.e. the ratio of the size of the $k^{th}$ community to the total number of vertices. Similarly $P(b_k)$ denotes the case for partition $\mathcal{B}$. The Mutual Information can also be expressed as:

\begin{equation}\label{NMI_num}
I(\mathcal{A} ; \mathcal{B}) = \sum_j \sum_k P(a_k \cap b_j) \log
\frac{P(a_k \cap b_j)}{P(a_k)P(b_j)}, 
\end{equation}
where $P(a_k \cap b_j)$ is the probability that a random vertex is both in $k^{th}$ and $j^{th}$ communities. Basically the larger the intersection of the $k^{th}$ and $j^{th}$ communities of $\mathcal{A}$ and $\mathcal{B}$ respectively is, the higher the Mutual Information. Finally to normalize the Mutual Information score:
\begin{equation}\label{NMI}
\mbox{NMI}(\mathcal{A} , \mathcal{B})
=
\frac{
I(\mathcal{A} ; \mathcal{B})
}
{
[H(\mathcal{A})+ H(\mathcal{B})]/2.
}
\end{equation}

\subsection{Omega Index}
The unadjusted Omega Index averages the number of vertex pairs that are in the \emph{same number} of communities. Such vertex pairs are known to be in \emph{agreement}. Consider the case with two partitions $\mathcal{A}$ and $\mathcal{B}$, and the number of communities in them are $|\mathcal{A}|$ and $|\mathcal{B}|$ respectively. The function $t_j(\mathcal{A})$ returns the set of vertex pairs that appears exactly in $j \geq 0$ overlapping communities in $\mathcal{A}$. Thus the unadjusted Omega Index:
\begin{equation}
\omega_u(\mathcal{A};\mathcal{B}) = \frac{1}{{n \choose 2}} \sum_{j=0}^{max(|\mathcal{A}|,|\mathcal{B}|)} |t_j(\mathcal{A}) \cap t_j(\mathcal{B})|.
\end{equation}

To account for vertex pairs that are allocated into the same communities by chance, we have to subtract it from expected omega index of a null model:
\begin{equation}
\omega_e(\mathcal{A};\mathcal{B}) = \frac{1}{{n \choose 2}^2} \sum_{j=0}^{max(|\mathcal{A}|,|\mathcal{B}|)} |t_j(\mathcal{A})| \cdot |t_j(\mathcal{B})|.
\end{equation}

Finally the Omega Index is given by:
\begin{equation}
\omega(\mathcal{A};\mathcal{B}) = \frac{\omega_u(\mathcal{A};\mathcal{B}) - \omega_e(\mathcal{A};\mathcal{B})}{1 - \omega_e(\mathcal{A};\mathcal{B})}.
\end{equation}

It is possible for the Omega Index to be negative, where there are less agreement than pure stochastic coincidence would expect. It is regarded as uninteresting as it does not suggests anything more than the fact that the partitions are not similar. Identical partitions have Omega Index of 1.

\subsection{Notations For Empirical Results}
To simplify the plots exhibiting the results from our experiments, we will use some shorthands to denote the algorithms and partitions. For communities detection algorithms, we use $\mathcal{A}$ and $\mathcal{P}$ to denote ``Algorithm" and ``SSRM Partition" respectively (Table \ref{table:notation}).

\begin{table}[h]
\begin{tabular}{ll} 
Notation & Description \\ \hline
$\mathcal{A}_1$ & Projection-Binary \\
$\mathcal{A}_2$ & Projection-Average \\
$\mathcal{A}_3$ & Projection-Neighbors \\
$\mathcal{A}_4$ & Cluster-based Similarity Partition Algorithm \\
$\mathcal{A}_5$ & Generalized Canonical Correlations \\
$\mathcal{A}_6$ & CLECC Bridge Detection\\
$\mathcal{A}_7$ & Frequent Closed Itemsets Mining \\
$\mathcal{A}_8$ & PARAFAC, Tensor Decomposition\\
$\mathcal{P}_1$ & $\{[c_1,c_2],[c_3,c_4]\}$ of SSRM \\
$\mathcal{P}_2$ & $\{[c_2,c_3],[c_1,c_4]\}$ of SSRM \\
$\mathcal{P}_3$ & $\{[c_1,c_3],[c_2,c_4]\}$ of SSRM 
\end{tabular}
\caption{Shorthands for the different algorithms and ``ground-truth" communities.
\label{table:notation}}
\end{table}

\section{Empirical Comparison of Multiplex-Communities Detection Algorithms} \label{sec:comparison}
\subsection{Unstructured Synthetic Random Multiplex}
Figure \ref{fig:USRMplot_Omega} shows the Omega Index of all pairwise multiplex communities detection algorithms for the USRM benchmarks. The last 13 boxplots on the right are the pairwise comparisons with overlapping-communities algorithms, i.e. $\mathcal{A}_7$ and $\mathcal{A}_8$. 

The first observation is that $\mathcal{A}_8$ (PARAFAC) is not similar to any of the algorithms. One of the reasons is that it is hard to choose the right parameters for PARAFAC, i.e the $k$ approximation for Eq. \ref{Eq:PARAFAC} and the predefined threshold for the top few elements of the rank-one tensors. There is no systematic method besides manual experimentation for a given graph \cite{dunlavy,Judith,Andri,papalexakis}.

However overlapping-communities algorithm $\mathcal{A}_7$ (Frequent Closed Itemsets Mining) is similar to a class of non-overlapping algorithms, i.e. $\mathcal{A}_1$ to $\mathcal{A}_4$. Specifically Frequent Closed Itemsets Mining is similar to the class of Projection algorithms $\mathcal{A}_1$ to $\mathcal{A}_3$. In fact the Omega Index for all pairwise comparisons of the algorithm family $\mathcal{F} = \{\mathcal{A}_1, \mathcal{A}_2,\mathcal{A}_3,\mathcal{A}_4,\mathcal{A}_7\}$ is generally greater than the other algorithm pairs. This is supported by their NMI scores in Fig. \ref{fig:USRMplot}.

However the family of algorithms $\mathcal{F}$ has low pairwise Omega Index and NMI scores for USRM 1, 3 and 6. There is no fundamental reasons to why the class of projection algorithms ($\mathcal{A}_1$ to $\mathcal{A}_3$) should produce non similar communities, therefore we can deduce that USRM 1, 3 and 6 are not good multiplexes for benchmarks.

The reason is that USRM1 is the combination of two  Erd\H{o}s-R\'{e}nyi graphs, thus there is no community structure. Whereas USRM 3 and 6 are the combinations of  Barab\'{a}si-Albert graph with Erd\H{o}s-R\'{e}nyi  and  Barab\'{a}si-Albert respectively. Although  Barab\'{a}si-Albert has structural properties, it has low Clustering Coefficient (similar to Erd\H{o}s-R\'{e}nyi), which measures the tendency that vertices cluster together. Therefore the vertices in USRM 3 and 6 do not form communities.

To increase the clustering coefficient of a multiplex, we can introduce a Watts Strogatz graph to the system. We can approximate the clustering coefficient of the projection of such multiplexes \cite{loe13} and deduce the tendency that communities exists. This allows us to observe interesting relationships between the algorithms from benchmark multiplexes like USRM 2, 4 and 5. 

For example USRM 5 in Fig. \ref{fig:USRMplot} shows that the similarity range of $\mathcal{F}$ with $\mathcal{A}_6$ (CLECC Bridge Detection) is wide. This is more apparent when we study their relationship with SSRM benchmark.

\begin{figure}
  \centering
  \includegraphics[width=80mm]{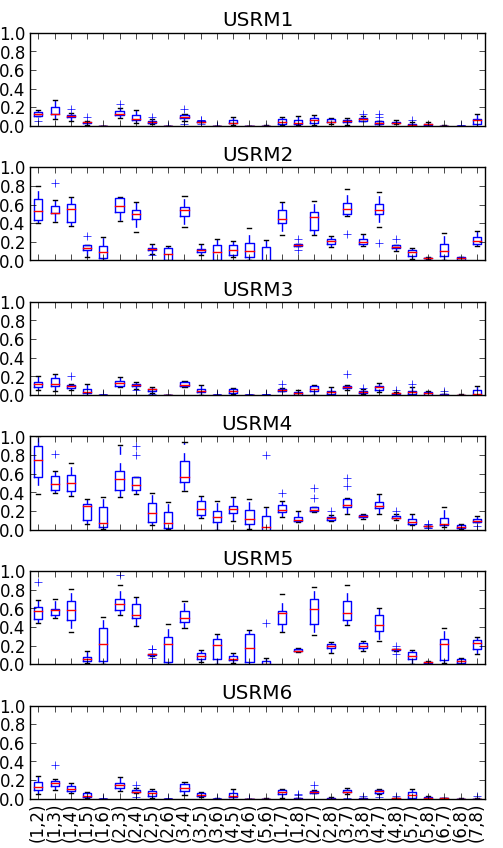}
    \caption{The Omega Index of all pairwise multiplex-communities detection algorithms for the different benchmark USRM. The tuple $(i,j)$ on the x-axis refers to the pairwise comparison of $\mathcal{A}_i$ and $\mathcal{A}_j$. The tuples are arranged such that the comparisons with overlapping-communities algorithms (13 tuples) are placed on the right. \label{fig:USRMplot_Omega}}
\end{figure}

\begin{figure}
  \centering
  \includegraphics[width=85mm,height=135mm]{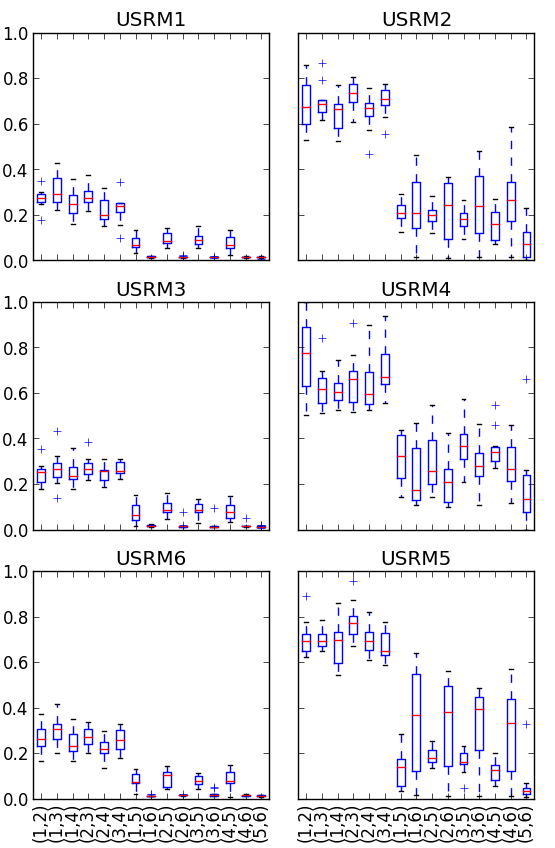}
    \caption{The NMI scores of all pairwise non-overlapping multiplex-communities detection algorithms for USRM benchmarks. The tuples are arranged such that pairwise comparisons of $\{\mathcal{A}_1, \mathcal{A}_2,\mathcal{A}_3,\mathcal{A}_4\}$ are grouped to the left of the boxplots. The ``interesting" figures USRM 2, 4 and 5 are placed to the right for comparison.\label{fig:USRMplot}}
\end{figure}

\subsection{Structured Synthetic Random Multiplex}
In the previous section, USRM benchmark suggests that $\{\mathcal{A}_1, \mathcal{A}_2,\mathcal{A}_3,\mathcal{A}_4\}$ yields similar communities. However Fig. \ref{fig:SSRMplot} shows that the SSRM communities from $\mathcal{A}_4$ (Cluster-based Similarity Partition Algorithm) are distinct from the class of projection algorithms ($\mathcal{A}_1$ to $\mathcal{A}_3$). Therefore we can conclude that Cluster-based Similarity Partition Algorithm provides an alternative perspective for multiplex-communities, and it is not reducible to the class of projection algorithms.

Recall from the last observation in the previous section that the similarity range of $\mathcal{A}_6$ (CLECC Bridge Detection) with projection algorithms is wide. Fig. \ref{fig:SSRMplot} shows that there are cases where they are very different (close to NMI = 0), and there are cases where the NMI $>0.8$. In this way our experiments highlights the weakness of the CLECC Bridge Detection algorithm.

CLECC occasionally yields separate components prematurely, where one of the components is a small cluster of vertices or even a single vertex as a community. Therefore it appears that CLECC yields significantly different communities since the rest of the algorithms tend to produce balanced-size communities. Hence if we exclude such cases for CLECC, $\mathcal{A}_6$ is similar to projection algorithms for SSRM.

Finally we will compare the algorithms with the ``ground-truth" partitions $\mathcal{P}_1$, $\mathcal{P}_2$ and $\mathcal{P}_3$. Table \ref{table:SSRM_GT} shows that none of the algorithms were able to capture $\mathcal{P}_1$, which is the partition with high redundancy. $\mathcal{A}_3$ (Projection-Neighbors) was proposed to extract high redundancy communities \cite{Berlingerio11_2}.

In contrast, $\mathcal{A}_6$ (CLECC Bridge Detection) was able to find the high CLECC partition $\mathcal{P}_2$. However it does not have any advantage over any of the projection algorithms ($\mathcal{A}_1$ to $\mathcal{A}_3$). This further supports that $\mathcal{A}_6$ is similar to the class of projection algorithms.

The results by Tang et. al shows that $\mathcal{A}_5$ (Generalized Canonical Correlations) tends to be better than $\mathcal{A}_4$ (Cluster-based Similarity Partition Algorithm)  to capture high-modularity communities like $\mathcal{P}_3$ \cite{tang2009}. This was also observed in this experiment.

\begin{figure}
  \centering
  \includegraphics[width=80mm]{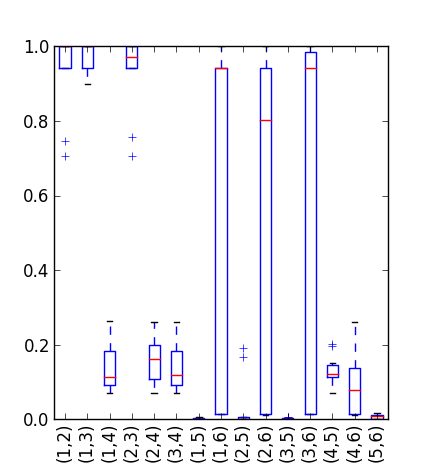}
    \caption{The NMI scores of all pairwise non-overlapping multiplex-communities detection algorithms for SSRM benchmarks. The tuples are arranged in the same way as Fig. \ref{fig:USRMplot}. \label{fig:SSRMplot}}
\end{figure}

\begin{table}[h]
\setlength{\tabcolsep}{1em}
\begin{tabular}{lrrr}
& $\mathcal{P}_1$    & $\mathcal{P}_2$    & $\mathcal{P}_3$    \\ \hline
$\mathcal{A}_1$ & 0     & \textbf{0.983} & 0     \\
$\mathcal{A}_2$ & 0.002 & \textbf{0.94}  & 0.017 \\ 
$\mathcal{A}_3$ & 0     & \textbf{0.978} & 0     \\ 
$\mathcal{A}_4$ & \textbf{0.019} & 0.14  & 0.083 \\ 
$\mathcal{A}_5$ & 0.004 & 0.002 & \textbf{0.158} \\ 
$\mathcal{A}_6$ & 0.006 & \textbf{0.964} &  0.006 \\ 
\end{tabular}
\caption{The NMI scores between the algorithms and the different ground-truth partitions. The entries in bold represents the algorithms that are ``close" to the ground-truth partition.
\label{table:SSRM_GT}}
\end{table}

\subsection{Real World Multiplex}
The results for the European Air Transportation Network is similar to the Youtube Social Network, hence Fig. \ref{fig:AirlinePlot} is sufficient for this discussion. The general observation is similar to USRM 2, 4 and 5 in Fig. \ref{fig:USRMplot_Omega}, where the set of algorithms $\{\mathcal{A}_1,\mathcal{A}_2,\mathcal{A}_3,\mathcal{A}_4,\mathcal{A}_7\}$ are relatively similar with pairwise NMI score of $\approx 0.55$.

In addition, Fig. \ref{fig:AirlinePlot} highlights that overlapping-communities detection algorithm $\mathcal{A}_8$ (PARAFAC) is relatively more similar to $\mathcal{A}_5$ (Generalized Canonical Correlations) than the other non-overlapping communities detection algorithms. This observation was less apparent in Fig. \ref{fig:USRMplot_Omega}.

Unfortunately $\mathcal{A}_6$ (CLECC Bridge Detection) tends to halt prematurely despite different parameter choices. Hence we did not managed to get any insight for CLECC Bridge Detection in this experiment.

\begin{figure}
  \centering
  \includegraphics[width=80mm]{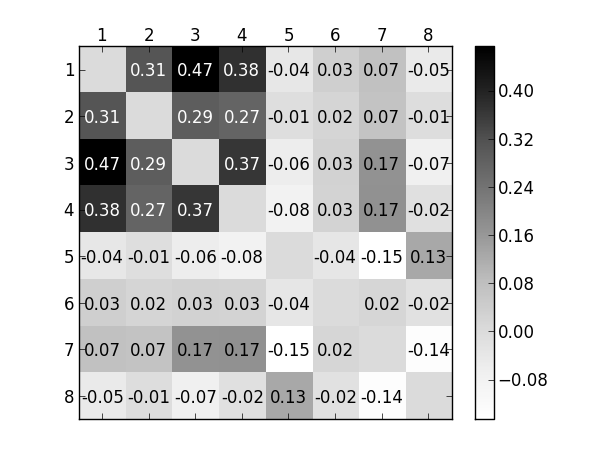}
    \caption{The Omega Index heatmap of all pairwise algorithms for the European Air Transportation Network. This result is similar to the Youtube Social Network. \label{fig:AirlinePlot}}
\end{figure}

\subsection{Summary}
The parameters in algorithms $\mathcal{A}_6$, $\mathcal{A}_7$ and $\mathcal{A}_8$ require manual fine-tuning to yield meaningful communities for comparisons. Hence it is not practical to exhaustively test for all configurations for these algorithms. However the out come of the analysis doesn't change in any essential way when for different choice of parameters.

From USRM benchmarks and real world multiplexes, algorithms in the set $\{\mathcal{A}_1,\mathcal{A}_2,\mathcal{A}_3,\mathcal{A}_4,\mathcal{A}_7\}$ tends to generate relatively similar partitions. However SSRM benchmark demonstrate that $\mathcal{A}_4$ is able to capture high redundancy communities and performed differently from the class of projection algorithms.


Our experiments with USRM and SSRM benchmarks support that $\mathcal{A}_6$ (CLECC Bridge Detection) is similar to the class of projection algorithms $\{\mathcal{A}_1,\mathcal{A}_2,\mathcal{A}_3\}$ when the algorithm does not infer small clusters of vertices as communities. Therefore CLECC Bridge Detection is not very stable without careful analysis.

Finally we have to emphasize that the absolute Omega Index and NMI index are generally low ($<0.5$) for most pairwise comparisons. This implies that the algorithms technically perceive less-than-ideal multiplex-communities differently. However our results shows that there are some vertex clusters within communities that are more likely to be captured by certain algorithms than the others.

\section{Discussions And Conclusion}
The literature review and SSRM benchmark highlight the additional complexity to define a multiplex-community. It is a balance between the structural topology of the communities and their relationships. SSRM is a useful benchmark to bring out the fundamental differences between these algorithms. Therefore further research is done to improve SSRM such that the different definitions of communities are more apparent.

The main contribution of this paper is to consolidate and compare all the existing multiplex-communities detection algorithms. The long list of synonymous names for multiplex causes many researchers to be unaware of related efforts for multiplex, and hence has a tendency to make this research some what diffuse and fragmented \cite{Kivela2013}. Therefore we hope that this paper brings awareness for further developments on multiplex-communities detection algorithm such that researchers can build on to existing ideas.

\bibliography{database}

\end{document}